\@undefined\usepackage[usenames,dvips]{color}
\else\usepackage[usenames,dvipsnames]{color}
\theoremstyle{plain}
\newtheorem{proposition}{Proposition}
\theoremstyle{definition}
\newtheorem{definition}{Definition}
\newtheorem{remark}{Remark}
\newcommand{\ind}{\textnormal{ind}}
\title{Heterogeneity extends criticality}
\author[1,2]{Fernanda S\'anchez-Puig}
\author[1,2,3]{Octavio Zapata}
\author[4]{Omar K. Pineda}
\author[5,6,2,*]{Gerardo I\~niguez}
\author[2,7,8,9*]{Carlos Gershenson}
\affil[1]{\small{Facultad de Ciencias, Universidad Nacional Auton\'{o}ma de M\'{e}xico, 04510 Ciudad de M\'exico, Mexico}}
\affil[2]{\small{Centro de Ciencias de la Complejidad, Universidad Nacional Auton\'{o}ma de M\'{e}xico, 04510 Ciudad de M\'exico, Mexico}}
\affil[3]{\small{Coordinaci\'on de Universidad Abierta, Innovaci\'on Educativa y Educaci\'on a Distancia, Universidad Nacional Auton\'{o}ma de M\'{e}xico, 04510 Ciudad de M\'exico, Mexico}}
\affil[4]{\small{Azure Core Security Services, Microsoft, Redmond, WA 98052, USA}}
\affil[5]{\small{Department of Network and Data Science, Central European University, 1100 Vienna, Austria}}
\affil[6]{\small{Department of Computer Science, Aalto University School of Science, 00076 Aalto, Finland}}
\affil[7]{\small{Departamento de Ciencias de la Computaci\'on, Instituto de Investigaciones en Matem\'aticas Aplicadas y en Sistemas, Universidad Nacional Auton\'{o}ma de M\'{e}xico, 04510 Ciudad de M\'exico, Mexico}}
\affil[8]{\small{Lakeside Labs GmbH, Lakeside Park B04, 9020 Klagenfurt am W\"orthersee, Austria}}
\affil[9]{\small{Santa Fe Institute. 1399 Hyde Park Rd., Santa Fe, NM 87501, USA}}
\affil[*]{\small{Corresponding author email: cgg@unam.mx, iniguezg@ceu.edu}}
\date{}
\begin{document}

\maketitle

\begin{abstract}
Criticality has been proposed as a mechanism for the emergence of complexity, life, and computation, as it exhibits a balance between robustness and adaptability. In classic models of complex systems where structure and dynamics are considered homogeneous, criticality is restricted to phase transitions, leading either to robust (ordered) or adaptive (chaotic) phases in most of the parameter space. Many real-world complex systems, however, are not homogeneous. Some elements change in time faster than others, with slower elements (usually the most relevant) providing robustness, and faster ones being adaptive. Structural patterns of connectivity are also typically heterogeneous, characterized by few elements with many interactions and most elements with only a few. Here we take a few traditionally homogeneous dynamical models and explore their heterogeneous versions, finding evidence that heterogeneity extends criticality. Thus, parameter fine-tuning is not necessary to reach a phase transition and obtain the benefits of (homogeneous) criticality. Simply adding heterogeneity can extend criticality, making the search/evolution of complex systems faster and more reliable. Our results add theoretical support for the ubiquitous presence of heterogeneity in physical, social, and technological systems, as natural selection can exploit heterogeneity to evolve complexity ``for free". In artificial systems and biological design, heterogeneity may also be used to extend the parameter range that allows for criticality.
\end{abstract}

\section{Introduction}

Phase transitions have been studied extensively to describe changes in states of physical matter \citep{stanley1987introduction}, and are typically characterized by symmetry breaking \citep{Anderson1972}. They have also been studied more generally in dynamical systems, such as vehicular traffic \citep{ChowdhuryEtAl2000,helbing2001traffic}. Near phase transitions, critical dynamics are known to occur \citep{Mora2011}. These are also associated with scale invariance and complexity \citep{christensen2005complexity}. There are several examples of criticality in biological systems \citep{RevModPhys.90.031001}, including neural dynamics \citep{doi:10.1098/rsta.2007.2092,Chialvo2010}, genetic regulatory networks \citep{doi:10.1073/pnas.0506771102,Balleza:2008}, and collective motion \citep{Vicsek2012}.

It is often argued that critical dynamics are prevalent or desirable in a broad variety of systems because they offer a balance between robustness and adaptability \citep{Langton1990,Kauffman1993,Hidalgo_2016}. If dynamics are too ordered, then information and functionality can be preserved, but it is difficult to adapt. The opposite occurs with chaotic dynamics: change allows for adaptability, but it also leads to fragility, as small changes percolate through the system. Thus, for phenomena such as life, computation, and complex systems in general, critical dynamics should be favored by evolutionary processes \citep{Gershenson:2010,TorresSosa2012,Roli2018}.

There are different ways in which one can measure criticality, many of which are related to entropies. For example, Fisher information maximizes at phase transitions \citep{wang2011fisher,Prokopenko2011Relating-Fisher}. Still, it rapidly decreases and it is difficult to evaluate how far a system is from criticality. In this work, we use a measure of complexity \citep{Fernandez2013Information-Mea,CxContinuous2016} based on Shannon information that also maximizes at phase transitions, but reduces its value more gradually and is straightforward to calculate compared to Fisher information, as the latter requires to measure the effects of controlled perturbations. There are several definitions and measures of complexity \citep{lloyd2001measures}, but, crucially, the one we use here is highly correlated with criticality.

If criticality is found only near a phase transition, then most of a parameter space would have ``undesirable'' solutions. Thus, how can a search procedure find the right parameters for criticality? Self-organized criticality \citep{BTW1987,Adami1995,hesse2014self,Vidiella2020.11.16.385385} has been proposed as an answer. Although interesting and useful for specific cases, it is not universal and has hidden variables. In general, one can think of different mechanisms that will find or adjust parameters so that criticality is achieved. But, could criticality be more prevalent than previously thought?

In previous work where we have studied rank dynamics in a variety of systems \citep{Cocho2015,Morales2016,10.3389/fphy.2018.00045,Iniguez2022}, we observe that the most relevant elements change more slowly than less relevant elements. We hypothesized that heterogeneous temporality equips systems with robustness and adaptability at the same time. Here we explore the role of heterogeneity in different dynamical systems. We show that different types of heterogeneity extend the parameter region where critical dynamics are observed. Thus, we can say that heterogeneity results in ``criticality for free", reducing the problem of fine-tuning parameters.

\section{Results}

We first present results of a heterogeneous version of the Ising model, where elements have different temperatures. We then explore structural and temporal heterogeneity in random Boolean networks. Afterwards, we abstract the specific dynamics of a system and investigate under which conditions heterogeneity promotes criticality. Finally, we provide a general solution, independent of any measure, using Jensen's inequality.

\subsection{Value heterogeneity: the Ising model}
\label{sec:ising}

We can consider a system of interacting atoms arranged  in a network-like structure (Fig.~\ref{fig:ising_cyclic}).
The state of an atom is defined by its dipole nuclear magnetic moment: a two-valued spin representing the orientation of the magnetic field produced by the atom.
Intuitively, neighboring atoms with the same spin value contribute less to the total energy of the system than atoms with different spin values.
Systems of this kind  evolve preferentially to states with the lowest possible energy. When the temperature of the environment is increased, the system heats, and we can observe a sudden change in a global property of the system, namely loss of magnetization.
A theoretical model of such a system of atoms is the Ising model  \citep{ising1925beitrag,Glauber1963TimeDependent-S}.

\begin{figure}[t]
\begin{center}
\subfigure[]{
   \includegraphics[width=.30\textwidth]{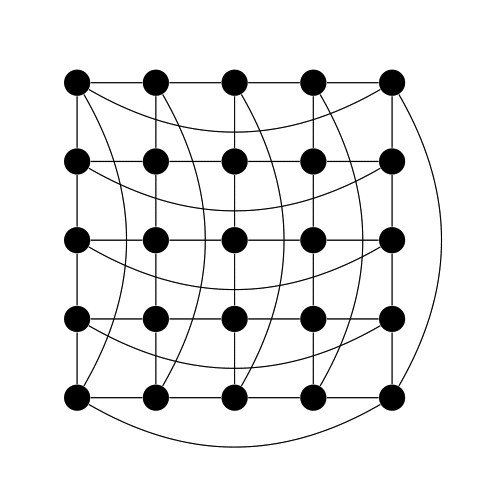}
    \label{fig:ising_cyclic}
}
\subfigure[]{
    \includegraphics[width=.44\textwidth]{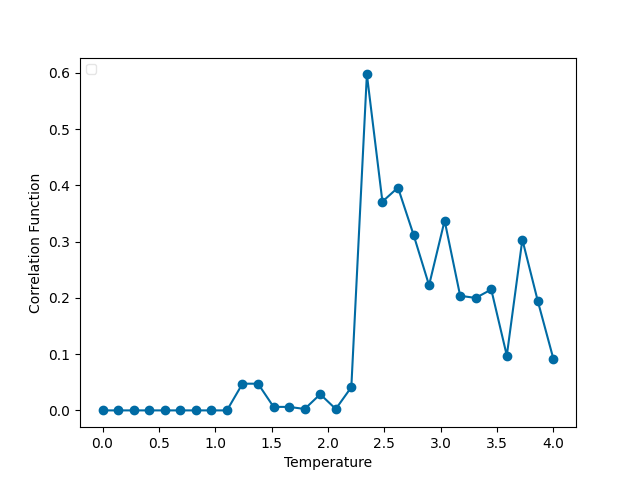}
    \label{fig:ising_correlation}
}
\subfigure[]{
    \includegraphics[width=.44\textwidth]{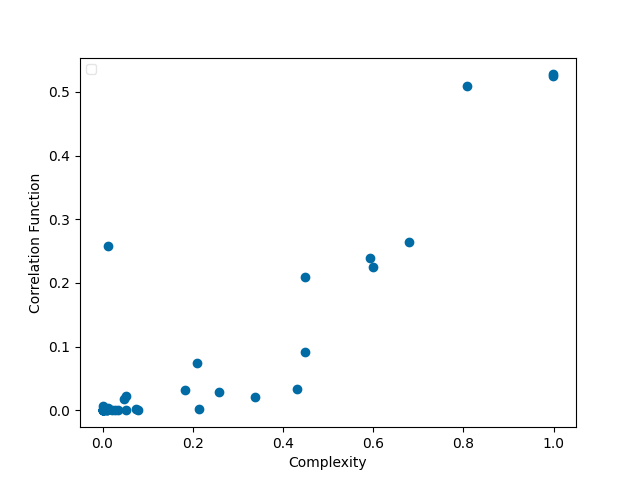}
   \label{fig:Ising_correlation}
    }
\subfigure[]{
    \includegraphics[width=.44\textwidth]{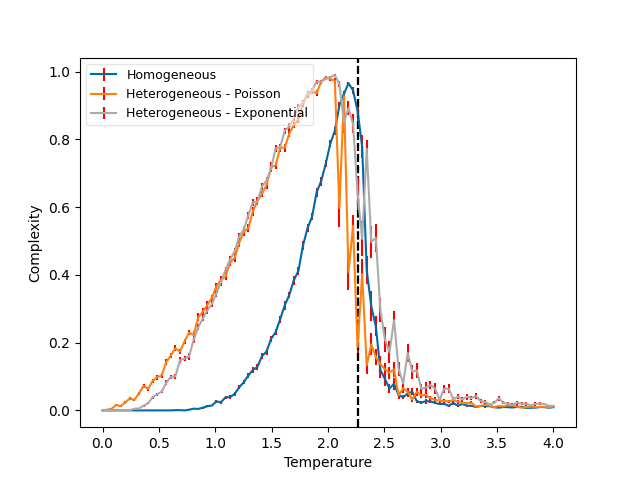}
    \label{fig:Ising_complexity}
    }
\caption{
(A) Two-dimensional Ising model displayed on a square lattice. The graph may be wrapped into a torus, highlighting periodic boundary conditions.
(B) The correlation function decays faster at low and high temperatures than at the critical temperature where the correlation function is maximum.
(C) Correlation as a function of complexity in two-dimensional Ising model illustrates that complexity is a good proxy for criticality.
(D) Average complexity with error bars of the Ising model for different temperatures, considering homogeneous (blue), heterogeneous with exponentially distributed (green), and heterogeneous with Poisson distributed (orange) temperatures. The black dotted vertical line represents the theoretical phase transition at $T\approx2.27$ (in practice smaller due to finite size effects).
}
\label{fig:ising}
\end{center}
\end{figure}

The Ising model is usually homogeneous: all cells have the same temperature, and one explores different properties as the temperature $T$ varies. This is a good assumption when all atoms can be considered to behave in a similar way. However, if we are modeling an Ising-like biological system \citep{Hopfield1982}, then each element might have slightly different properties.
In the proposed heterogeneous case, each cell has a temperature taken from a Poisson distribution with a mean equal to the temperature of the homogeneous case (see Sec. \ref{sec:Ising} for details).

Following \citet{LopezRuiz:1995}, we have proposed a measure of complexity \citep{Fernandez2013Information-Mea} based on Shannon's information \citep{Shannon1948},
\begin{equation}
I = -K \sum_{i=i}^{b} p_{i} \log p_{i},
\label{eq:I}
\end{equation}
where $K$ is a positive constant and $b$ is the length of the alphabet (for all the cases considered in this paper, $b=2$). This measure is equivalent to the Boltzmann-Gibbs entropy. To normalize $I$ to $[0,1]$, we use
\begin{equation}
K = \frac{1}{\log_{2}b}.
\end{equation}
$I$ is maximal when the probabilities are homogeneous, i.e. there is the same probability of observing any symbol along a string. $I$ is minimal when only one symbol is found in a string (so it has a probability of one, and all the rest have a probability of zero). Chaotic dynamics are characterized by a high $I$, while ordered (static) dynamics are characterized by a low $I$. Inspired by \citet{LopezRuiz:1995}, we define complexity $C$ as the balance between ordered and chaotic dynamics,
\begin{equation}
C = 4 \cdot I \cdot (1-I),
\label{eq:C}
\end{equation}
where the constant 4 is added to normalize the measure to $[0,1]$ \citep{10.3389/frobt.2017.00010}.

Figure~\ref{fig:ising_correlation} shows the correlation of the Ising model for varying temperature. This is maximal in the phase transition at $T\approx2.27$, i.e. criticality. Figure~\ref{fig:Ising_correlation} shows that there is a correspondence between the correlation and the complexity measure in Eq.~\ref{eq:C}. Figure~\ref{fig:Ising_complexity} shows results of average complexity $C$ as $T$ increases. Complexity is maximal near the phase transition for the homogeneous case. Heterogeneity shifts the expected maximum complexity (that reflects criticality), but it also expands it, in the sense that the area under the curve is broadened. In other words, critical-like dynamics (one can assume arbitrarily complexity values greater than 0.8, just for comparison) are found for a broader range of $T$ values.

\subsection{Temporal and structural heterogeneity: random Boolean networks}
\label{sec:boolean}

A gene is a part of the genomic sequence that encodes how to produce (synthesise) either a protein or some RNA (gene products).
Gene product synthesis is called gene expression.
Because not all gene products are synthesised at the same time, the regulation of gene expression is constantly taking place within a cell.
In fact, the expression of each gene is regulated (among many things) by the expression of other genes in the genome.
 This gives rise to an interaction structure known as a genetic regulatory network.
Boolean networks are a theoretical model of genetic regulatory networks.
In random Boolean networks (RBNs) \citep{Kauffman1969,Kauffman1993}, traditionally there is homogeneous topology and updating. In this case, critical dynamics are found close to a phase transition between ordered and chaotic phases \citep{DerridaPomeau1986,LuqueSole1997,Wang:2010}.

\begin{figure}[t]
\begin{center}
\begin{subfigure}[]{
    $\xymatrix@C=30pt@R=15pt{
    1\ar@/^/[dd]&&2\ar@/^/[ld]&\\
    &3\ar@/^/[ru]\ar[dr]\ar[dd]\ar@(ul,dl)[]&&4\ar@(ru,rd)[]\ar@/^/[dl]&\\
    5\ar@(ul,dl)[]\ar@/^/[uu]\ar[dr]&&6\ar[uu]\ar@/^/[ru]&\\
    &7\ar[uuul] &   }$
    \label{fig:RBN_structure}
    }
\end{subfigure}
\begin{subfigure}[]{
    \begin{tabular}{c|c|c|c|c|c|c|c}
    $xy$ &$f_1$ &$f_2$&$f_3$ &$f_4$&$f_5$ &$f_6$&$f_7$   \\ \hline
    $00$&$0$&$0$&$0$&$0$&$0$&$1$&$1$ \\ \hline
    $01$&$0$&$1$&$1$&$1$&$0$&$0$&$0$ \\ \hline
    $10$ &$0$&$1$&$1$&$0$&$0$&$0$&$1$  \\ \hline
    $11$ &$1$&$1$&$0$&$1$&$0$&$0$&$0$  
    \end{tabular}
    \label{fig:RBN_table}
    }
\end{subfigure}
\begin{subfigure}[]{
    \includegraphics[width=0.5\textwidth]{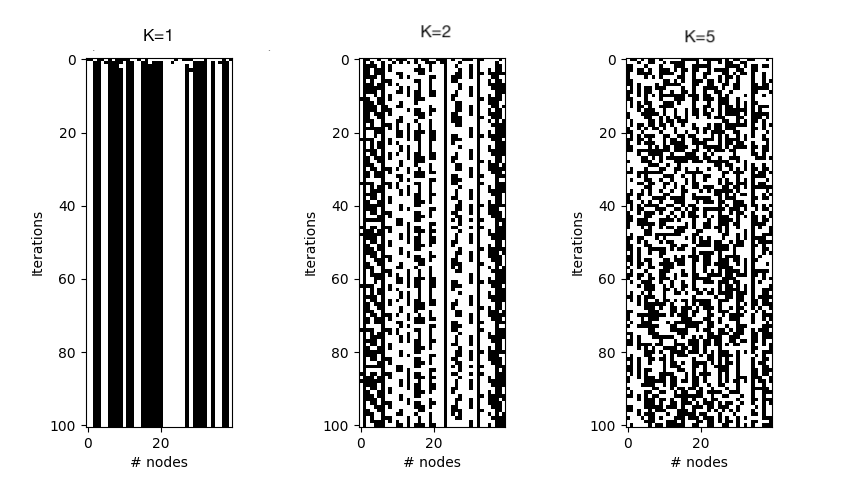}
    \label{fig:crbn-k1}
}
\end{subfigure}
\begin{subfigure}[]{
    \includegraphics[height=0.3\textwidth]{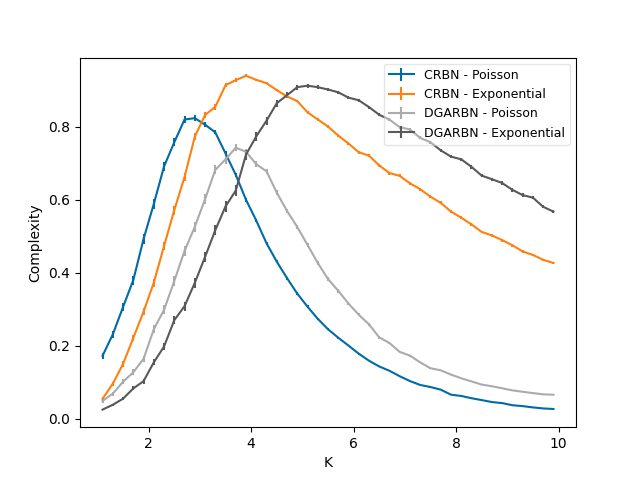}
    \label{fig:cxRBN-plot}
    }
\end{subfigure}
\caption{
(A) Example of a $k$-in regular directed graph with set of nodes $V=\{1,2,\dots,7\}$ ($N=7$) and $K=2$.
(B) Truth table of the functions comprising a Boolean network with $7$ nodes and $K=2$.
(C) Example of three regimes of CRBN and their measures of complexity using 40 nodes ($N=40$) with 100 steps each. (time flows downwards)
For $K=1$, $C=0.0558$.
For $K=2$, $C=0.9951$.
For $K=5$, $C=0.4714$.
(D) Average complexity of RBNs as the average connectivity $K$ is increased. Combinations of ``homogeneous'' structure (Poisson), heterogeneous structure (Exponential), homogeneous temporality (CRBN), and heterogeneous temporality (DGARBN).  $\Delta K$ = 0.2, $N$=100, with 1000 iterations for each $K$.
}
\label{fig:cxRBN}
\end{center}
\end{figure}

Figure~\ref{fig:RBN_structure} shows an example of the topology of a RBN with seven nodes ($N=7$) and two connections (inputs $K$) each. Each node has a lookup table where all possible combinations of their inputs are specified (e.g. Figure~\ref{fig:RBN_structure}). Using an ensemble approach, for each parameter combination, we randomly generate topologies (structure) and lookup tables (function), and then evaluate them in simulations. Depending on different parameters, the dynamics of RBNs can be classified as ordered, critical (near a phase transition), and chaotic. Figure~\ref{fig:crbn-k1} shows example of these dynamics for different $K$ values.

One can have heterogeneous topology in different ways \citep{OosawaSavageau2002,Aldana2003}, as genetic regulatory networks are not homogeneous: few genes affect many genes, and many genes affect few genes. 
Here, we use Poisson and exponential distributions. Strictly speaking, both are heterogeneous, but exponential is more heterogeneous than Poisson, which here we consider as ``homogeneous''. The technical reason for using a Poisson distribution is that it allows us to explore non-integer average connectivity in the network.

We can also have heterogeneous updating schemes \citep{Gershenson2002e}, as it can be argued that not all genes in a network ``march in step'' \citep{HarveyBossomaier1997}. Classical RBNs (CRBNs) have synchronous, homogeneous temporality, while in here we use Deterministic Generalized Asynchronous RBNs (DGARBNs) for heterogeneous temporality. In particular, each node is updated every number of time steps equal to its out-degree, so the more nodes one node affects, the slower it will be updated (see Sec~\ref{sec:RBN} for details).



Fig.~\ref{fig:cxRBN-plot} compares the average complexity $C$ as the average connectivity $K$ is increased. Structural and temporal homogeneity (CRBN-Poisson) has a classical complexity profile, maximizing near the phase transition ($K=2$ for the thermodynamical limit, i.e., $N\rightarrow \infty$).
It can be seen that only structural heterogeneity (CRBN-Exponential) extends criticality more than only temporal heterogeneity (DGARBN-Poisson), that basically shifts the curve to the right. Still, having both structural and temporal heterogeneity (DGARBN-Exponential) extends criticality even more than having only structural heterogeneity.

\subsection{Arbitrary complexity}
\label{sec:cx}


Abstracting the results from the previous subsections, and trying not to depend on any model in particular, we can explore exhaustively the measure of complexity (Eq.~\ref{eq:C}) in homogeneous and heterogeneous settings, to observe when each case yields a higher average complexity. So we simply vary the probability $p_1$ of having ones in a binary string directly as shown in Figure \ref{fig:analitical1}.

In the homogeneous case, we calculate directly the complexity $C$ as a function of $p_1$ using Eq.~\ref{eq:C}, assuming that we are averaging the complexities of several elements with the same $p_1$. For the heterogeneous case, we generate a collection of probabilities with mean $p_1$ and standard deviation of 0.2  (truncating to zero negative values and to one values greater than one), calculate their complexity, and then average it. Heterogeneity achieves higher complexities for roughly $0.25 <p_1 <0.75$. One might wonder why all heterogenous complexities avoid extreme values, even when heterogeneous RBNs can have complexities close to zero and one. This is because of the standard deviation of the distributions from which the means are generated. Smaller standard deviations yield curves closer to the heterogeneous case.


\begin{figure}[t]
\begin{center}
\subfigure[]{
	\includegraphics[scale=0.5]{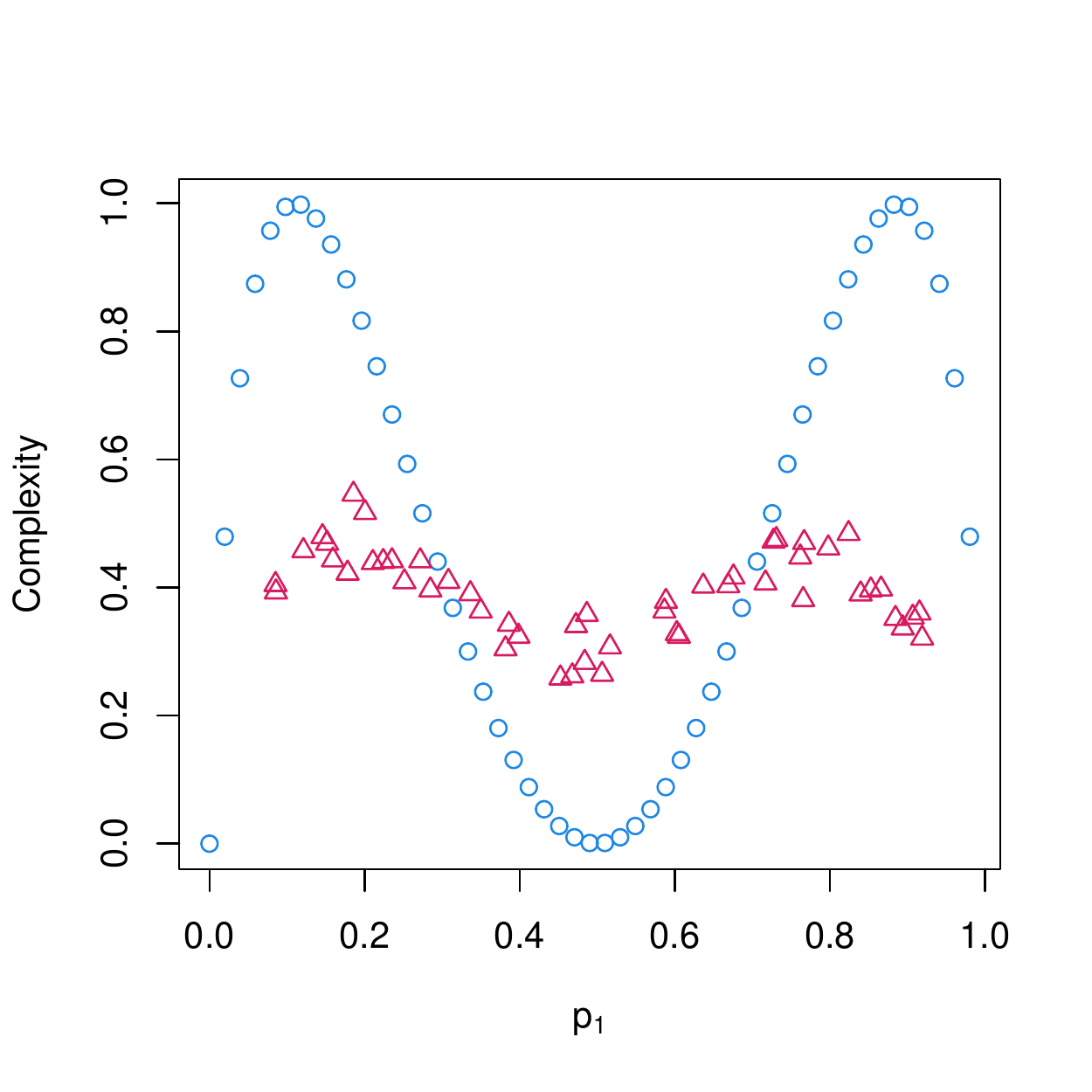}
	\label{fig:analitical1}
}
\subfigure[]{
	\includegraphics[scale=0.65]{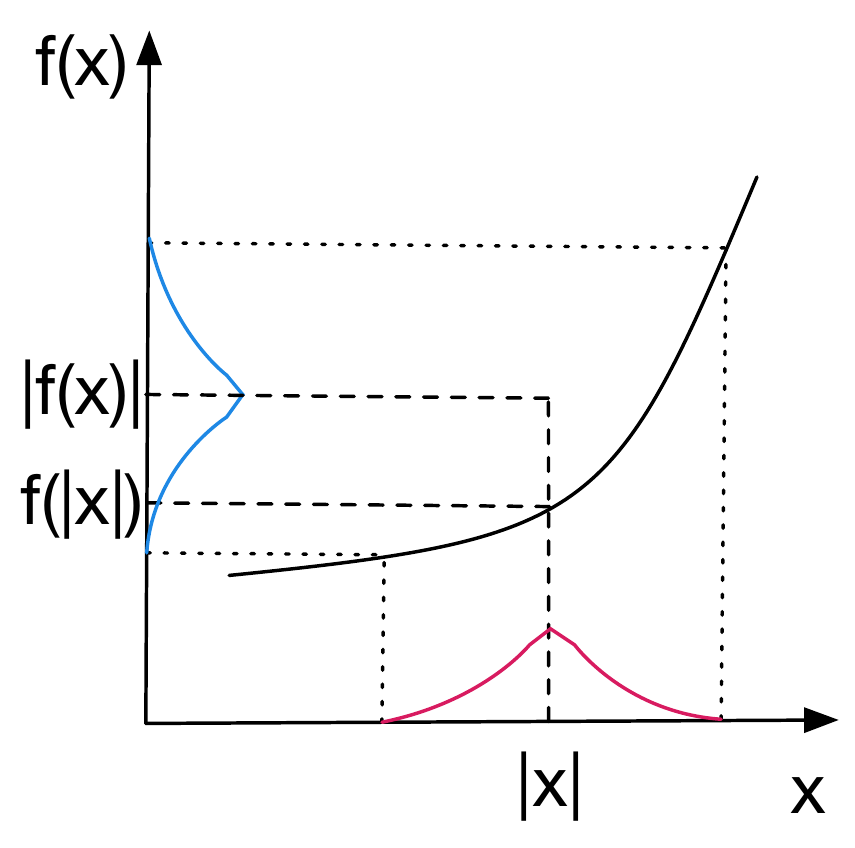}
	\label{fig:Jensen}
}

\caption{A. Average complexity $C$ for collections of strings with average probability of ones $p_1$, in homogeneous (blue circles) and heterogeneous (red triangles) cases. The latter yields higher average complexity in the central region, where the homogeneous complexity is low. 
B. Illustration of Jensen's inequality. The function of the averages $f(|x|)$ of a variable with a distribution with average $|x|$ is lower than the average of the functions $|f(x)|$ for concave functions. The opposite is the case for convex functions.}

\end{center}
\end{figure}

%
%


By assuming that heterogeneity sometimes will be better than homogeneity and vice versa, we can further generalize our results to be independent of any measure or function. If we have homogeneity of a variable $x$, all elements will have the same value for $x$, and thus the mean $|x|$ will be equal to any $x_i$. Thus, the average of any function $|f(x)|$ will be equal to any $f(x_i)$. If we have heterogeneity, then the mean $|x|$ will be given by some distribution of different values of $x$, and similarly for $|f(x)|$.

We can then say that heterogeneity is preferred when the function of the average is greater than the average of the function,
\begin{equation}
f(|x|)>|f(x)|.
\end{equation}

Jensen's inequality \citep{mcshane1937jensen} tells us already that heterogeneity will be ``better'' than homogeneity for concave functions, as illustrated in Figure~\ref{fig:Jensen}. For more complex functions, their concave parts will benefit from heterogeneity and their convex parts will benefit from homogeneity (as it can be seen for $C$ in Figure \ref{fig:analitical1}).

For linear functions, it can be shown that there is no difference between homogeneity and heterogeneity, as $f(|x|)$ will always be equal to $|f(x)|$ (see proof in Section~\ref{sec:linear}). Thus, it can be concluded that the difference between homogeneity and heterogeneity is relevant only for nonlinear functions.





\section{Discussion}

There are several recent examples of heterogeneity offering advantages when compared to homogeneous systems in the literature. For example, in public transportation systems, theory tells us that passengers are served optimally (wait at stations for a minimum time) if headways are equal, i.e., homogeneous. However, equal headways are unstable \citep{GershensonPineda2009,Chen2021}. Still, adaptive heterogeneous headways can deliver supraoptimal performance through self-organization \citep{Gershenson:2011a,10.1371/journal.pone.0190100}, due to the slower-is-faster effect \citep{GershensonHelbing2015}: passengers do wait more time at stations, but once they board a vehicle, on average they will reach faster their destination, as the idling required to maintain equal headways is avoided.

There are other examples where heterogeneity promotes synchronization (see \citet{Zhange2024299118} and references therein). In particular, \citet{Zhange2024299118} shows that random parameter heterogeneity among oscillators can consistently rescue the system from losing synchrony. In related work, \citet{Molnar2021} finds that heterogeneous generators improve stability in power grids. Recently, \citet{math9212769} explored complex networks with heterogeneous nodes, observing that these have a greater robustness as compared to networks with homogeneous nodes. In social networks, \citet{ZHOU202047} finds that heterogeneity of social status may drive the network evolution towards self-optimization. Also, structural heterogeneity has been shown to favor the evolution of cooperation \citep{Santos2006,Santos2008}.

These examples suggest that heterogenous networks improve information processing. With heterogeneity, elements can in principle process information differently, potentially increasing the computing power of a heterogeneous system over an homogeneous one with similar characteristics. This is related to Ashby's law of requisite variety \citep{Ashby1956,Gershenson2015Requisite-Varie}, which states that an active controller should have at least the same variety (number of states) as the controlled. It is straightforward to see with random Boolean networks that temporal heterogeneity increases the variety of the system: the state space (of size $2^N$ for homogeneous temporality) can explode once we have to include the precise periods and phases of all nodes (in heterogeneous temporality), as different combinations of the temporal substates may lead a transition from the same node substate to different node substates. Also in random Boolean networks, higher $K$ implies more possible nets. Even if there are evolutionary pressures for efficiency (smaller networks), if heterogeneity shifts criticality to higher $K$, then it will be easier for an evolutionary search to find critical dynamics in larger spaces.

Shannon's~\citeyearpar{Shannon1948} information, equivalent to Boltzmann-Gibbs entropy, is maximal when the probability of every symbol or state is the same, i.e. homogeneous. Thus, one can measure heterogeneity as an inverse of entropy (one minus the normalized Shannon's information) \citep{Fernandez2013Information-Mea}. It is clear that maximum heterogeneity (as measured here, it would occur when only one symbol or state has a probability of one and all the rest a probability of zero) has its limitations. Thus, we can assume that there will be an ``optimal'' balance between minimum and maximum heterogeneities. The precise balance will probably depend on the system, its context, and may even change in time. If we want heterogeneity to take the dynamics towards criticality (or somewhere else), then the precise ``optimal'' heterogeneity will depend on how far we are from criticality \citep{Gershenson:2010,Pineda2019}. In this sense, a potential relationship with no-free-lunch theorems \citep{wolpert95no,wolpert1997no} seems an interesting area of further research.

When homogeneous systems are analyzed in terms of their symmetries, heterogeneity is a type of symmetry breaking. Still, in converse symmetry breaking \citep{PhysRevLett.117.114101}, only heterogeneity leads to stability, i.e. the system symmetry is broken to preserve the state symmetry. This idea can be used to control the stability of complex systems using heterogeneity \citep{Nicolaou2021}. A further avenue of research is the relationship between heterogeneity and L\'{e}vy flights \citep{Iniguez2022}. L\'{e}vy flights are heterogeneous, since they consist of many short jumps and few large ones. They offer a balance between exploration and exploitation, and seem advantageous for foraging \citep{Ramos-Fernandez2004}, preventing extinctions \citep{Dannemann3794}, and search algorithms \citep{Martinez-Arevalo2020}. Another interesting relationship to study is the one between heterogeneity and non-reciprocal systems \citep{Fruchart2021}.




Network science \citep{albert2002statistical,Newman:2003,Barabasi2016} has demonstrated the relevance of structural heterogeneity. This should be complemented with a systematic exploration of temporal \citep{Barabasi2005} and other types of heterogeneity. For example, it would be interesting to study heterogeneous adaptive \citep{gross2009adaptiveNets} and temporal \citep{Holme2012TemporalNets,holme2015modern} networks, where each node has a different speed for its dynamics. Temporal heterogeneity enables a system to match the requisite variety of their environment at different timescales. If systems can adapt at the scales at which their environments change, then they will better do so if they have a variety of timescales, i.e., heterogeneous temporality. Recently, \cite{Sormunen2022} have shown that adaptive networks have critical manifolds that can be navigated as parameters change. In other words, criticality is not restricted to a single value, but can be associated to a manifold in a multidimensional system.

Further research is required to better understand the role of heterogeneity in the criticality of complex systems. The present work is limited and many open questions remain. We encourage the reader to experiment with a heterogeneous version of their favorite homogeneous complex system model, be it structural, temporal, or other type of heterogeneity. We could learn more from heterogeneous models of collective motion, opinion formation, financial markets, urban growth, and more. This could contribute to a broader understanding of heterogeneity and its relationship with criticality.






\section{Methods}


A \emph{graph} $G$ consists of a set of \emph{vertices} $V$ and a set of \emph{edges} $E$, where an edge is an unordered pair of distinct vertices of $G$.
We write $u\sim v$ to denote that $\{u,v\}$ is an edge and in this case we say that $u$ and $v$ are \emph{adjacent}.
If $H$ is a graph with vertex set $W \subset V$ and edge set $F \subset E$, we say that $H$ is a \emph{subgraph} of $G$.
A graph is said to be \emph{connected} if for every pair of distinct vertices $u$ and $v$, there is a finite sequence of  distinct vertices $a_0,a_1\dots,a_n$ such that $a_0 = u$, $a_n = v$, and $a_{i-1} \sim a_i$ for each $i = 0,1,\dots,n$.
A \emph{connected component} of $G$ is a connected subgraph of $G$.
A graph is said to be finite just in case its vertex set is finite.
A graph is called $d$-\emph{regular} if every vertex is adjacent to exactly  $d\geq 1$ distinct vertices.

A \emph{directed graph} $D$ consists of a set $V$ of elements $a,b,c,\dots$ called the \emph{nodes} of $D$ and a set $A$ of ordered pairs of nodes $(a,b), (b,c), \dots$ called the  \emph{arcs} of $D$.
We use the symbol $ab$ to represent the arc $(a,b)$.
If $ab$ is in the arc set $A$ of $D$, then we say that $a$ is an  \emph{incoming neighbour} (or \emph{in-neighbour}) of $b$, and also that $b$  is a \emph{outgoing neighbour} (or \emph{out-neighbour}) of $a$.
We say that $D$ is \emph{$k$-in regular}  ($k\geq 1$)  if every node has exactly $k$ in-neighbours: for every node $a$ there are distinct nodes $a_1,\dots,a_k$, such that $a_ja\in A$ for $j=1,\dots,k$.
In other words, $D$ is $k$-in regular just in case the set of  in-neighbours of any node has exactly $k$ elements, all distinct, and possibly including itself.
The \emph{out-degree} of a node $a$ is the number of nodes $b$ such that the arc $ab$ is in the arc set of $D$.
Thus the out-degree of $a$ is the number of out-neighbours of $a$.
Similarly, the \emph{in-degree} of a node $a$ is the number of nodes $c$ such that $ca \in A$.
Thus the in-degree of $a$ is the number of in-neighbours of $a$.

\subsection{The Ising model with individual temperatures}
\label{sec:Ising}


It is quite common to study the Ising model on a finite, connected $4$-regular graph where the number of edges is twice the number of vertices.
This graph is usually introduced  as a finite lattice of two-dimensional  points on the surface of a three-dimensional torus.
An example of such a graph with $25$  vertices and $50$ edges is shown in Figure~\ref{fig:ising_cyclic}.

\subsubsection{The Ising model}\label{sec:model}

We start with a finite graph $G = (V, E)$.
We identify the vertex set of $G$ with a system of interacting atoms.
Each vertex $u \in V$ is assigned a  \emph{spin} $\sigma_u$ which can take the value $+1$ or $-1$.
The \emph{energy} of a configuration of spins is
\[
H(\sigma)\ =\  -\sum_{\substack{u,v \in V \\ u\sim v}} \sigma_u\sigma_v.
\]
The energy increases with the number of pairs of adjacent vertices having different spins.
The Ising model is a way to assign probabilities to the system configurations.
The probability of a configuration $\sigma$ is proportional to $\exp(-\beta H(\sigma))$, where $\beta \geq 0$ is a variable inversely proportional to the temperature.

More precisely, the \emph{Ising model} with inverse temperature $\beta$ is the probability measure $\mu$ on the set of configurations $X = \{+1,-1\}^V$ defined by
\[
\mu(\sigma)\ =\ \frac{1}{Z}  \exp(-\beta H(\sigma))
\]
where $Z = Z(G,\beta)$ is a normalizing constant.
This constant can be computed explicitly as
\[
Z(G,\beta)\ =\ \exp(-\beta |E|)\sum_{F\subset E} (\exp(\beta)-1)^{|F|} 2^{k\langle F\rangle}
\]
where $|A|$ denotes the cardinality of a finite set $A$, and $k\langle F\rangle$ the number of connected components of  the (spanning) subgraph $\langle F\rangle = (V, F)$ of $G$.
Then
\[
\lim_{\beta \to 0}\ Z(G,\beta) \ = \ C
\]
where $C = \sum_{F\subset E}2^{k\langle F\rangle}$ and so, for any configuration $\sigma$, we have that
\[
\lim_{\beta \to 0}\ \mu(\sigma) \ =\ \frac{1}{C}.
\]
 As the temperature increases (and hence $\beta \to 0$), $\mu$ converges to the uniform measure over the space of configurations.
When the temperature decreases, $\beta > 0$ increases, and $\mu$ assigns greater probability to configurations that have a large number of pairs of adjacent vertices with the same spin.

\subsubsection{Simulation}
Most simulations of the Ising model use either the Glauber dynamics or the Metropolis algorithm for constructing a Markov chain with stationary measure $\mu$.
Here we only describe the Metropolis chain for the Ising model.

Given two configurations $\sigma,\sigma' \in X$, let $P(\sigma,\sigma')$ denote the probability that the Metropolis chain for the Ising model moves from $\sigma$ to $\sigma'$.
For every $a \in V$, we write $\sigma^a$ to denote the configuration obtained from $\sigma$ by flipping the sign of the value that $\sigma$ assigns to $a$ and leaving all the other spins the same.
 In other words, $\sigma^a \in X$ is the unique configuration which agrees everywhere with $\sigma$ except for the spin assigned to vertex $a$: for every $u \in V$,  $\sigma_{u}^{a} = \sigma_u$ if $u \neq a$ and $\sigma^{a}_{u} = - \sigma_u$ if $u = a$.
 We let the transition probabilities to be positive $P(\sigma, \sigma') > 0$ just in case $\sigma' = \sigma$ or $\sigma' = \sigma^a$ for some $a\in V$.
In the latter case, the Metropolis chain moves from $\sigma$ to $\sigma^a$ with probability
 \[
P(\sigma,\sigma^a)\ =\ \frac{1}{|V|} \left(1\ \wedge\ \frac{\mu(\sigma^a)}{\mu(\sigma)} \right)
\]
where $x \wedge y$ denotes the minimum of the quantities $x$ and $y$.
The probability that the chain stays at the same configuration $\sigma$ is then
 \[
P(\sigma,\sigma)\ =\ 1 - \sum_{a\in V} P(\sigma,\sigma^a).
\]

A key property about these transition probabilities is that they only depend on the ratios $\mu(\sigma^a)/\mu(\sigma)$. Therefore, to simulate the Metropolis chain it is not necessary to compute the normalizing constant $Z$ of the Ising measure $\mu$.

To summarize, we have constructed a transition matrix $P$ that defines a reversible Markov chain with stationary measure $\mu$.

\begin{proposition}
The Metropolis chain for the Ising model has stationary measure $\mu$.
\end{proposition}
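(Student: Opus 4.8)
The plan is to establish the stronger property of \emph{reversibility} (detailed balance) of the chain $P$ with respect to $\mu$, and then to deduce stationarity from it, since reversibility is the cleanest route and is built directly into the Metropolis construction.

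First I would isolate what must be checked, namely the detailed balance equation
\[
\mu(\sigma)\,P(\sigma,\sigma')\ =\ \mu(\sigma')\,P(\sigma',\sigma)
\]
for every pair of configurations $\sigma,\sigma'\in X$. Because $P(\sigma,\sigma')=0$ unless $\sigma'=\sigma$ or $\sigma'=\sigma^a$ for some $a\in V$, it suffices to verify this in two cases. When $\sigma'=\sigma$ the equation is a tautology. The only substantive case is $\sigma'=\sigma^a$, and here I would use the elementary observation that flipping twice at the same vertex is the identity, $(\sigma^a)^a=\sigma$, so the reverse transition is again a single flip at $a$, with $P(\sigma^a,\sigma)=\frac{1}{|V|}\bigl(1\wedge \mu(\sigma)/\mu(\sigma^a)\bigr)$.

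The key algebraic step is the elementary identity $x\,(1\wedge y/x)=x\wedge y$, valid for positive reals $x,y$, which gives
\[
\mu(\sigma)\,P(\sigma,\sigma^a)\ =\ \frac{1}{|V|}\bigl(\mu(\sigma)\wedge\mu(\sigma^a)\bigr).
\]
The right-hand side is manifestly symmetric under exchanging $\sigma$ and $\sigma^a$, so it equals $\mu(\sigma^a)\,P(\sigma^a,\sigma)$, which establishes detailed balance. I would emphasize that this argument uses nothing about the explicit Gibbs form of $\mu$; it relies only on the Metropolis acceptance rule through the ratio $\mu(\sigma^a)/\mu(\sigma)$, which is precisely why the normalizing constant $Z$ never enters.

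Finally, to pass from reversibility to stationarity I would sum the detailed balance relation over $\sigma$ and use that $P$ is a genuine stochastic matrix, obtaining
\[
\sum_{\sigma\in X}\mu(\sigma)\,P(\sigma,\sigma')\ =\ \mu(\sigma')\sum_{\sigma\in X}P(\sigma',\sigma)\ =\ \mu(\sigma'),
\]
which is exactly $\mu P=\mu$. The one bookkeeping point to confirm along the way is that $P(\sigma,\sigma)=1-\sum_{a}P(\sigma,\sigma^a)$ is nonnegative, so that the rows of $P$ genuinely sum to one; this holds because each flip probability is at most $1/|V|$ and there are $|V|$ of them. I do not anticipate a real obstacle here: all the content sits in the symmetric form $\mu(\sigma)\wedge\mu(\sigma^a)$, and the only care needed is to handle the minimum cleanly via the identity above rather than splitting into the cases $\mu(\sigma^a)\le\mu(\sigma)$ and $\mu(\sigma)\le\mu(\sigma^a)$.
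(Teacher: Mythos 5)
Your proof is correct and follows essentially the same route as the paper's: verify detailed balance for the single-flip transitions, reduce via the identity $x\,(1\wedge y/x)=x\wedge y$ to the symmetric expression $\mu(\sigma)\wedge\mu(\sigma^a)$, and conclude. The only difference is that you spell out the passage from reversibility to stationarity and the stochasticity of $P$, which the paper leaves implicit.
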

\begin{proof}
It is sufficient to prove that the probability measure $\mu$ and the transition matrix $P$ satisfy the detailed balance equations
\begin{equation}\label{eq:1}
\mu(\sigma) P(\sigma,\sigma') = \mu(\sigma')P(\sigma',\sigma)
\end{equation}
for all $\sigma \neq \sigma'$.
To show this, it suffices to verify that  the equation~\eqref{eq:1} holds when $\sigma' = \sigma^a$ for some $a\in V$.
After cancellation of $1/|V|$ and distributing $\mu(\sigma)$ and $\mu(\sigma^a)$ accordingly,  it suffices to check
\[
\mu(\sigma)\ \wedge\ \mu(\sigma)\frac{\mu(\sigma^a)}{\mu(\sigma)}  \ = \ \mu(\sigma^a)\ \wedge\ \mu(\sigma^a)\frac{\mu(\sigma)}{\mu(\sigma^a)}
\]
or equivalently
\[
\mu(\sigma)\ \wedge\ \mu(\sigma^a)  \ = \ \mu(\sigma^a)\ \wedge\ \mu(\sigma)
\]
which is obvious.
\end{proof}

\subsubsection{Individual temperatures}
 In the previous section, we described how to construct a transition matrix $P$ that defines a reversible Markov chain with stationary measure $\mu$.
 Starting at  a configuration $\sigma$, the probability that the chain moves to a new configuration $\sigma^a$ for any $a \in V$, is given by
 \begin{align*}
P(\sigma,\sigma^a)\ &=\ \frac{1}{|V|} \left(1\ \wedge\ \frac{\mu(\sigma^a)}{\mu(\sigma)} \right)\\
&=\ \frac{1}{|V|} \left(1\ \wedge\ \frac{\exp(-\beta H(\sigma^a))}{\exp(-\beta H(\sigma))} \right)\\
&=\ \frac{1}{|V|} \left(1\ \wedge\ \exp(-\beta\Delta H_a(\sigma) ) \right)
\end{align*}
where
\begin{align*}
 \Delta H_{a}(\sigma)\ &= \ H(\sigma^a) - H(\sigma) \\
 &=\ -\sum_{\substack{u,v \in V \\ u\sim v}} \sigma^a_u \sigma^a_v + \sum_{\substack{u,v \in V \\ u\sim v}} \sigma_u \sigma_v \\
 &=\ -\sum_{\substack{u,v \in V \\ u\sim v}} \left( \sigma^a_u \sigma^a_v - \sigma_u \sigma_v \right)\\
 &=\ 2\sigma_a\sum_{\substack{u \in V \\ u\sim a}}\sigma_u.
\end{align*}
Thus, the transition probability from $\sigma$ to $\sigma^a$ of the Metropolis chain $P$ for the Ising model with parameter $\beta \geq 0$ is determined by the quantity
\[
\exp(-\beta\Delta H_a(\sigma)).
\]

We now turn to study a situation where each vertex $a$ has its own parameter $\beta_a$.
In other word, we shall describe a Markov chain $P_{\ind}$ that moves from $\sigma$ to $\sigma^a$ with probability depending on
\[
\exp(-\beta_a\Delta H_a(\sigma)),
\]
where $\beta_a \geq 0$ is a individual (possibly distinct) parameter for each $a \in V$.
More precisely, the probability that the new chain moves from $\sigma$ to $\sigma^a$  is defined as
\[
P_{\ind}(\sigma, \sigma^a) \ = \ \frac{1}{|V|} \left(1\ \wedge\ \exp(-\beta_a\Delta H_a(\sigma) ) \right).
\]
The probability that the chain stays at the same configuration is
\[P_{\ind}(\sigma, \sigma) \ = \ 1 - \sum_{a\in V} P_{\ind}(\sigma, \sigma^a).\]
Hence, all the configurations $\sigma'$ that differ from $\sigma$ in at least two vertices are not reachable from $\sigma$.
That is to say, $P_{\ind}(\sigma, \sigma') = 0$ if and only if $\sigma' \neq \sigma^a$ for any $a\in V$.

\begin{definition}[Ising measure with individual temperatures]
Let $G=(V,E)$  be a finite, connected graph and $(\beta_u:u\in V)$ a collection of non-negative real numbers.
The probability measure $\mu_\ind$ on $X = \{+1,-1\}^V$  is defined by
\[
\mu_\ind(\sigma) \ =\ \frac{1}{Z_\ind}
\exp\Bigg( \sum_{\substack{u,v \in V \\ u\sim v}}\beta_u \sigma_u\sigma_v\Bigg)
\]
where $Z_\ind = \sum_{\sigma \in X}\mu_\ind(\sigma)$ is a  normalizing constant.
\end{definition}

\begin{remark}
We can think of $\mu_\ind$ as an \emph{heterogenous Ising model} as opposed to the homogeneous version $\mu$ defined in Section \ref{sec:model} by
\[
\mu(\sigma)\ =\ \frac{1}{Z}  \exp\Bigg(\beta \sum_{\substack{u,v \in V \\ u\sim v}} \sigma_u\sigma_v\Bigg).
\]
\end{remark}

\begin{remark}
It is cleat that the probability measure $\mu$ is a stationary measure of the Markov chain defined by the transition matrix $P_\ind$ just in case we have $\beta_a = \beta$ for all $a \in V$.
In other words, $\mu_\ind = \mu$ if and only if the individual parameters $\beta_a$ in the definition of $P_\ind$ are all equal to the single parameter $\beta$ of the homogeneous Ising model.
\end{remark}

\begin{proposition}
The probability measure $\mu_\ind$ is the stationary measure of the Markov chain defined by the transition matrix $P_\ind$.
\end{proposition}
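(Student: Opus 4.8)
The plan is to follow the template of the first proposition and establish stationarity through the detailed balance equations
\[
\mu_\ind(\sigma)\,P_\ind(\sigma,\sigma^a)\ =\ \mu_\ind(\sigma^a)\,P_\ind(\sigma^a,\sigma),
\]
checked for every $a\in V$ and every $\sigma$. This suffices because, as already observed, $P_\ind(\sigma,\sigma')=0$ whenever $\sigma'$ differs from $\sigma$ in two or more sites, so the only transitions to account for are the single flips $\sigma\mapsto\sigma^a$ and the lazy move $\sigma'=\sigma$. Summing the detailed balance equations over $a$ and substituting $P_\ind(\sigma,\sigma)=1-\sum_a P_\ind(\sigma,\sigma^a)$ collapses them to the global balance identity
\[
\sum_{a\in V}\bigl[\mu_\ind(\sigma^a)\,P_\ind(\sigma^a,\sigma)-\mu_\ind(\sigma)\,P_\ind(\sigma,\sigma^a)\bigr]\ =\ 0,
\]
which is exactly the stationarity $\mu_\ind P_\ind=\mu_\ind$; detailed balance makes each summand vanish individually.

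The heart of the argument is the ratio $\mu_\ind(\sigma^a)/\mu_\ind(\sigma)$. First I would record that $(\sigma^a)^a=\sigma$, whence $\Delta H_a(\sigma^a)=-\Delta H_a(\sigma)$ and therefore $P_\ind(\sigma^a,\sigma)=\frac{1}{|V|}\bigl(1\wedge\exp(\beta_a\Delta H_a(\sigma))\bigr)$, with the \emph{same} local parameter $\beta_a$ controlling both directions of the flip at $a$. The normalizing constant $Z_\ind$ cancels in the ratio, and only the edges incident to $a$ survive the difference of the exponents. The plan is then to show
\[
\frac{\mu_\ind(\sigma^a)}{\mu_\ind(\sigma)}\ =\ \exp\!\bigl(-\beta_a\,\Delta H_a(\sigma)\bigr),
\]
after which detailed balance reduces, exactly as in the first proposition (cancel $1/|V|$ and distribute $\mu_\ind$ through the minimum), to the trivial identity $\mu_\ind(\sigma)\wedge\mu_\ind(\sigma^a)=\mu_\ind(\sigma^a)\wedge\mu_\ind(\sigma)$.

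The step I expect to be the main obstacle is precisely the verification of that ratio identity. Computing the single-flip difference of the exponent $\sum_{u\sim v}\beta_u\sigma_u\sigma_v$ collects one contribution from each edge incident to $a$, and these contributions are weighted by the temperatures $\beta_u$ of the \emph{neighbours} of $a$ as well as by $\beta_a$ itself, whereas $P_\ind$ weights the flip at $a$ by $\beta_a$ alone. Reconciling this asymmetry is the delicate point: one must check whether the neighbour terms reorganize into exactly $\beta_a\Delta H_a(\sigma)=2\beta_a\sigma_a\sum_{u\sim a}\sigma_u$, and I expect this is where the homogeneous reduction noted in the preceding remark becomes visible, since the per-edge weights only line up symmetrically when the $\beta_u$ agree. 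Should the ratio identity fail pointwise, reversibility cannot hold and one is forced back onto the global balance identity above, proved directly by summing over $a$ — a substantially harder task, since the individual terms need no longer cancel.
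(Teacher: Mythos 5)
Your strategy coincides with the paper's own: establish detailed balance for the single-flip transitions, use $\Delta H_a(\sigma^a)=-\Delta H_a(\sigma)$ to write $P_\ind(\sigma^a,\sigma)=\frac{1}{|V|}\bigl(1\wedge\exp(\beta_a\Delta H_a(\sigma))\bigr)$ with the same local parameter $\beta_a$, and reduce everything to the ratio identity
\[
\frac{\mu_\ind(\sigma)}{\mu_\ind(\sigma^a)}\ =\ \exp\!\bigl(\beta_a\Delta H_a(\sigma)\bigr),
\]
after which the $\wedge$-cancellation is word-for-word the argument of the first proposition. The gap is that you stop exactly at the step that constitutes the entire content of the proof: you name the verification of this ratio identity as the ``main obstacle,'' express doubt that the neighbour terms reorganize into $\beta_a\Delta H_a(\sigma)$, and resolve nothing. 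A proof must either carry that computation through or execute the global-balance fallback you mention; as written, the proposal does neither, so it is an outline of the paper's proof with its one substantive step left open.

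That said, the difficulty you flag is genuine and is precisely where the paper's argument is delicate. The paper closes the step ``by inspection,'' asserting
\[
\sum_{\substack{u,v \in V \\ u\sim v}}\beta_u\bigl(\sigma_u\sigma_v-\sigma^a_u\sigma^a_v\bigr)\ =\ 2\beta_a\sigma_a\sum_{\substack{v\in V\\ a\sim v}}\sigma_v,
\]
which tacitly assumes that every edge incident to the flipped vertex $a$ carries the weight $\beta_a$ rather than the weight of its other endpoint. Since the exponent of $\mu_\ind$ is a sum over adjacent pairs, assigning $\beta_u$ to the edge $\{u,v\}$ requires a convention, and the convention ``the flipped vertex supplies the weight'' cannot be imposed consistently for every choice of $a$ at once unless adjacent vertices share the same $\beta$. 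Had you pushed your computation through under the symmetric (ordered-pair) reading, you would have found the extra term $2\sigma_a\sum_{v\sim a}\beta_v\sigma_v$ that does not appear in $\beta_a\Delta H_a(\sigma)$ --- exactly the asymmetry you anticipated. Your instinct was sound, but identifying the obstacle is not the same as overcoming it; the proposal needed to either prove the identity under an explicit convention for the edge weights or demonstrate that reversibility fails and stationarity must be argued differently.
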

\begin{proof}
In order to satisfy the detailed balanced equations
\[
\mu_\ind(\sigma) P_{\ind}(\sigma, \sigma^a)  = \mu_\ind(\sigma^a)P_{\ind}(\sigma^a, \sigma)
\]
 we must have
\[
\mu_\ind(\sigma)\left(1\ \wedge\ \exp(-\beta_a\Delta H_a(\sigma) ) \right) \ =\ \mu_\ind(\sigma^a) \left(1\ \wedge\ \exp(\beta_a\Delta H_a(\sigma) ) \right)
\]
for all $\sigma$ and $\sigma^a$, because
\[\Delta H_a(\sigma^a) \ = \ H(\sigma) - H(\sigma^a) \ =\ -\Delta H_a(\sigma) .\]
Now, if $\Delta H_a(\sigma)\geq 0$ then $\beta_a\Delta H_a(\sigma) \geq 0$, and hence $\exp(\beta_a\Delta H_a(\sigma) ) \geq 1$, so
\[
\mu_\ind(\sigma) \exp(-\beta_a\Delta H_a(\sigma) )   \ =\ \mu_\ind(\sigma^a).
\]
Otherwise, if $\Delta H_a(\sigma)< 0$ then $-\beta_a\Delta H_a(\sigma) \geq 0$, and so $\exp(-\beta_a\Delta H_a(\sigma) ) \geq 1$, hence
\[
\mu_\ind(\sigma)    \ =\ \mu_\ind(\sigma^a)\exp(\beta_a\Delta H_a(\sigma) ).
\]
In both cases, we arrive at the conclusion that in order for $\mu_\ind$ to be the stationary measure of the chain defined by $P_\ind$, we must have
\begin{equation}\label{eq:2}
\frac{\mu_\ind(\sigma)}{\mu_\ind(\sigma^a)}   \ =\ \exp(\beta_a\Delta H_a(\sigma) )
\end{equation}
for every $\sigma \in X$ and  $a \in V$.

Now we proceed to prove that equation~\eqref{eq:2} holds.
After cancellation of $1 / Z_\ind$ and using properties of the exponential function, it suffices to check
\[
\sum_{\substack{u,v \in V \\ u\sim v}}\beta_u \sigma_u\sigma_v - \sum_{\substack{u,v \in V \\ u\sim v}} \beta_u \sigma^a_u\sigma^a_v \ = \ \beta_a\Delta H_a(\sigma)
\]
By inspection,
\begin{align*}
\sum_{\substack{u,v \in V \\ u\sim v}}\beta_u \sigma_u\sigma_v - \sum_{\substack{u,v \in V \\ u\sim v}} \beta_u \sigma^a_u\sigma^a_v
&= \ \sum_{\substack{u,v \in V \\ u\sim v}} (\beta_u \sigma_u\sigma_v - \beta_u \sigma^a_u\sigma^a_v) \\
&= \  2\beta_a \sigma_a \sum_{\substack{v\in V \\ a\sim v}} \sigma_v\\
&= \ \beta_a\Delta H_a(\sigma).
\end{align*}
Therefore, the probability measure $\mu_\ind$ and the transition matrix $P_\ind$ satisfy the detailed balance equations  and the result follows.
\end{proof}



\subsection{Random Boolean networks}
\label{sec:RBN}



\subsubsection{Homogeneous random Boolean networks}\label{ssub:hom_rbn}

Let $D= (V,A)$ be a directed graph.
We identify the nodes of $D$ with the genes in a gene regulatory network.
Suppose $D$ is a $k$-in regular directed graph.
Figure~\ref{fig:RBN_structure} is an example of a $2$-in regular digraph with $7$ nodes, i.e. $N=7, K=2$.

A family $(f_a)_{a\in V}$ of functions $f_a\colon \{0,1\}^k\longrightarrow\{0,1\}$ is called a \emph{Boolean network on} $D$.
Figure~\ref{fig:RBN_table} is an example of a Boolean network on a graph with $7$ nodes, and with the parameter of ``connectivity'' $k$ equal to $2$.
A Boolean network is called \emph{random} if the assignment $a\mapsto f_a$ is made at random by sampling independently and uniformly from the set of all the $2^{2^k}$ Boolean functions with $k$ inputs.
A function $\sigma\colon V \longrightarrow \{0,1\},\ a\mapsto \sigma_a$, is called a \emph{state} of the \emph{random Boolean network} on $D$.  The value $\sigma_a$ is called the \emph{state} of $a$.
The \emph{updating function} $F(\sigma)$ of a state $\sigma$ is the function $F(\sigma)\colon V\longrightarrow \{0,1\}, \ a\mapsto \sigma'_a,$ defined as
\[
\sigma'_a =  f_a(\sigma_{a_1},\dots,\sigma_{a_k}).
\]
For every $\sigma$,  we have a sequence of states $\sigma,\sigma',\sigma'',\dots$  such that each state is the updating function of the previous state in the sequence: $\sigma' = F(\sigma),\ \sigma'' = F(\sigma')$, and so on.
The sequence of states $\sigma_a,\sigma'_a,\sigma''_a,\dots$ is called the \emph{time series} of $a$.

\subsubsection{Heterogeneous random Boolean networks}
The description given in \ref{ssub:hom_rbn} corresponds to the case where the structure  and the updating scheme of the random Boolean network are  homogeneous.
Here we describe the two versions of heterogeneous random Boolean networks that were used in the simulations.
The first of these heterogeneous descriptions is structural, while the second gives rise to some sort of asynchronous dynamics.

The definition of Boolean network above makes the assumption  that every node in  the  directed graph has the same in-degree.
Now we consider Boolean networks over arbitrary (not necessarily $k$-in regular, directed) graphs.
A \emph{generalized Boolean network} on a directed graph $D$ consists of a family
$(f_a)_{a\in V}$ of functions $f_a\colon \{0,1\}^{k^-_a}\longrightarrow\{0,1\}$ with
 $k^-_a\geq 1$  the in-degree $a$.
Thus a \emph{heterogeneous random Boolean network} is a generalized Boolean network chosen uniformly at random. 

For talking about temporal heterogeneity we need to introduce asynchronous updating schemes \citep{Gershenson2002e}.
The \emph{heterogeneous updating function} of a state $\sigma$ of a random heterogeneous Boolean network on $D$ is the function $\tilde{F}(\sigma)\colon V\times \mathbb{N}\longrightarrow \{0,1\}$, defined by \[
(a,t)\mapsto
\begin{cases}
\sigma_a'&\textnormal{if}\ t\ \textnormal{is a multiple of}\ k_{a}^+ \\
\sigma_a& \textnormal{otherwise}
\end{cases}
\]
where $t$ is called the discrete \emph{time-step}, and $k_{a}^+$ is the out-degree of $a$: there are nodes $a_1,\dots,a_{k_{a}^+}$ all distinct, such that $aa_j\in E$ for $j=1,\dots,k_{a}^+$.


\subsection{Linear functions}
\label{sec:linear}

Here we observe that for linear functions, there is no difference between homogeneity and heterogeneity.
Indeed a function $f\colon \mathbb{R}^d \longrightarrow \mathbb{R}$ with $d\geq 1,$ is called \emph{linear} if for all $x,y\in\mathbb{R}^d$ and all $a,b\in\mathbb{R}$, we have \[f(ax + by)=af(x)+bf(y).\]
For $x_1,\dots,x_n\in \mathbb{R}^d,~n\geq 1$, it can be shown,
 by induction on the number of points $n$, that
\[
    f \left( \frac{1}{n} \sum_{i=1}^{n} x_i \right)
    =
    \frac{1}{n}\sum_{i=1}^{n} f(x_i).
\]
Thus, in the context of linear functions,  average value (heterogeneity) is the same as value of the average (homogeneity).










\subsubsection*{Acknowledgments}
O. Z. acknowledges support from CONACyT-SNI (Grant No. 620178). G.I. acknowledges support from AFOSR (Grant No. FA8655-20-1-7020), project EU H2020 Humane AI-net (Grant No. 952026), and CHIST-ERA project SAI (Grant No. FWF I 5205-N). C.G. acknowledges support from UNAM-PAPIIT (IN107919, IV100120, IN105122) and from the PASPA program from UNAM-DGAPA.  

\subsubsection*{Author contributions}
All authors conceived and designed the study. F.S.P., O.Z., and O.K.P. performed numerical simulations and derived mathematical results. All authors wrote the paper.

\subsubsection*{Competing interest statement}
All authors declare no competing interest.


\bibliographystyle{cgg}
\bibliography{refs,carlos}

\end{document}